\definecolor{DarkGreen}{rgb}{0.2,0.6,0.2}
\def\Om{\Omega}\def\om{\omega}
\numberwithin{equation}{section}
\def\Ind#1{{\mathbbmss 1}_{_{\scriptstyle #1}}}
\def\lra{\longrightarrow}
\def\da{\downarrow}
\def\wt{\widetilde}
\def\ignore#1{}
\def\bR{{\mathbb R}}
\def\bZ{\mathbb Z}
\def\bN{\mathbb N}
\def\bP{{\mathbb P}}
\def\bE{{\mathbb E}}
\def\cF{{\mathscr F}}
\def\cG{{\mathscr G}}
\def\cM{{\mathscr M}}
\def\cX{{\mathscr X}}
\def\<{\langle}\def\>{\rangle}
\newtheorem{theorem}{Theorem}[section]
\newtheorem{proposition}[theorem]{Proposition}
\newtheorem{lemma}[theorem]{Lemma}
\theoremstyle{definition}
\newtheorem{remark}[theorem]{Remark}
\title{Optimal portfolio liquidation in target zone models \\ and catalytic superprocesses}
\author{ \normalsize Eyal Neuman\\  \normalsize Institute for Advanced Study  \\ \normalsize Hong Kong University of Science and Technology \\
 \normalsize Hong Kong 
 \and \normalsize Alexander Schied\\  \normalsize Department of Mathematics\\
 \normalsize University of Mannheim\\
 \normalsize 68131 Mannheim, Germany}
\date{\small  First version: April 21, 2015\\
This version: June 25, 2015
 }
\begin{document}

\maketitle
\vspace{-0.5cm}

\begin{abstract}
We study optimal buying and selling strategies in target zone models. In these models the price is modeled by a diffusion process which is reflected at one or more barriers. Such models arise for example when a currency exchange rate is kept above a certain threshold due to central bank intervention. 
We consider the optimal portfolio liquidation problem for an investor for whom prices are optimal at the barrier and who creates temporary  price impact. This problem will be formulated as the minimization of a cost-risk functional over strategies that only trade when the price process is located at the barrier. 
  We solve the corresponding singular stochastic control problem by means of a scaling limit of critical branching particle systems, which is known as a catalytic superprocess. In this setting the catalyst is a set of points which is given by the barriers of the price process. For the cases in which the unaffected price process is a reflected arithmetic or geometric Brownian motion with drift, we moreover give a detailed financial justification of our cost functional by means of an approximation with discrete-time models.
\end{abstract}

\noindent{\it  Mathematics Subject Classification 2010:} 60J80, 60J85, 93E20, 60H20,
60H30, 91G80. 
\bigskip

\noindent{\it Key words:} optimal portfolio liquidation, market impact, target zone models, optimal stochastic control, catalytic superprocess.

\section{Introduction}
We consider currency exchange rate models in which the exchange rate is allowed to move inside a \emph{target zone} with one or more barriers which are enforced by monetary authorities. Target zone models were pioneered by Krugman in~\cite{Krugman91} and also studied in~\cite{Svensson, Bertola-Caballero92, De-Jong94, Ball-Roma98},  among others. In these models, the price process is modeled by a diffusion process which is reflected at one or more barriers. For instance one can think of a currency exchange rate that is kept above a certain threshold due to central bank intervention, such as it has recently been the case for the  rate of the Swiss Franc against the Euro. This rate has been kept above 1.20 SFR/EUR by the Swiss National Bank. If in a such a situation Swiss Francs are sold, then it is natural to do so when the exchange rate is as close as possible to the barrier of reflection. 

Market impact refers to the empirical fact that the execution of a large order affects the price of the underlying asset. Usually, this affect causes an unfavorable additional execution costs for the trader who is performing the exchange. As a result, a trader who wishes to minimize his trading costs has to split his order into a sequence of smaller orders which are executed over a finite time horizon. One of the well studied market impact models is the Almgren--Chriss model (see e.g.~\cite{AlmgrenChriss1,AlmgrenChriss2, Almgren} and references therein, see also~\cite{GatheralSchied} for a survey paper). In the framework of the Almgren-Chriss model it is assumed that the trading transactions cause a temporary imbalance in the supply and demand of the asset which is creating some temporary price movements. This type of affect on  prices is known as \emph{temporary price impact}. In~\cite{AlmgrenChriss1,AlmgrenChriss2, Almgren, AlmgrenHauptmanLi,Gatheral} among others, this temporary price impact is modeled as a function of the trading speed. 

In this work we consider trading strategies for an investor for whom prices are optimal at the barrier and who creates temporary price impact. In this situation, trading will be constrained to those times at which the price process is located \emph{at} the barrier. Such strategies cannot be absolutely continuous in time, and so  trading speed needs to be defined in a manner different from the Almgren--Chriss model. In Section~\ref{approx section}
 we will argue by means of an approximation with discrete-time models of Almgren--Chriss-type that strategies should instead be absolutely continuous with respect to the local time of the price process at the barrier and that trading speed should be defined as the corresponding Radon--Nikodym derivative. We thus arrive at an appropriate notion of temporary price impact and the resulting transaction costs in our setting.

We formulate the minimization of a cost-risk functional of such strategies as a singular stochastic control problem. 
In addition to the expected costs resulting from temporary price impact, the cost functional includes penalizations for current and final deviations from a target positions. 
In Section~\ref{sec-sol}  we solve this control problem by means of a scaling limit of critical branching particle systems, which is known as a catalytic superprocess.   As in~\cite{SchiedFuel}, the nonlinearity of the penalization terms will correspond to the offspring distribution of particles in the superprocess population but, in contrast to~\cite{SchiedFuel}, branching will only take place for particles that are located \emph{at} the barrier of the process. That is, the corresponding superprocess is a \emph{catalytic superprocesses} with a point catalyst given by the barrier of the price process. Such catalytic superprocesses have been widely studied for the case in which the one-particle motion is standard Brownian motion~\cite{DF94,Del96}; a construction for the case in which the one-particle motion is a general diffusion process was given in~\cite{Dynkin95}. 

 Section~\ref{sec-proofs} contains the proofs of our results.

\section{Statement of Results} \label{sec-res} 

We consider a Markovian model $(\Om,\cF,(\cF_t)_{t\ge0}, (P_z), (S_t)_{t\ge0})$ where, under $P_z$, the process $S=(S_t)_{t\ge0}$ is a diffusion process starting from $z\in\bR$ (or $z\in\bR_+$) and reflecting at a barrier $c\in\bR$. Such reflecting diffusion processes have often be proposed as models for currency exchange rates in a target zone~\cite{Krugman91,Bertola-Caballero92,Ball-Roma98}. Here, a target zone refers to a regime in which the exchange rate of a currency is kept within a certain range of values, either through an international agreement or through central bank intervention.  An example of such a target zone is the recent lower bound of 1.20 EUR/CHF  that was enforced through the Swiss National Bank;  see Figure~\ref{EUR_CHF Fig}.  A realization of reflected geometric Brownian motion with a reflecting barrier at $c=1.2$ is given in Figure~\ref{GBM7.pdf} for comparison.

We denote by $L=(L_t)_{t\ge0}$ the local time of $S$ at the barrier $c$. 
Let $\cX$ denote the class of all progressively measurable control processes $\xi$ for which $\int_0^T|\xi_t|\,dL_t<\infty$ $P_z$-a.s.~for all $T>0$ and $z\in\bR$. For $\xi\in\cX$ and $x_0\in\bR$ we define
$$X^{\xi}_t:=x_0+\int_0^t\xi_s\,dL_s.
$$
In the context of a target zone model, the control $\xi$ will be interpreted as a trading strategy that executes orders at infinitesimal rate $\xi_t\,dL_t$ at those times $t$ at which $S_t=c$. For instance, for a Swiss investor wishing to purchase euros during the period of a lower bound on the EUR/CHF exchange rate, it would have been natural to buy euros only at the lowest possible price, which at this time was equal to the barrier of 1.20 EUR/CHF. The resulting process $X^{\xi}_t=x_0+\int_0^t\xi_s\,dL_s$ therefore describes the inventory of the investor at time $t$.

Our goal is to minimize the functional 
\begin{align}\label{cost functional}
E_z\bigg[\,\int_0^T|\xi_t|^p\,L(dt)+\int_0^T\phi(S_t)|X^\xi_t|^p\,dt+\varrho(S_T)|X^\xi_T|^p\,\bigg],
\end{align}
over $\xi\in\cX$. Here, $p\in[2,\infty)$, $\phi:\bR\to\bR_+$ is bounded and measurable, and $\varrho:\bR\to\bR_+$ is bounded and continuous. 
As we will argue in Section~\ref{approx section} in some detail, 
the term $\int_0^T|\xi_t|^p\,L(dt)$ in~\eqref{cost functional} can  be interpreted as a cost term that arises from the temporary price impact generated by  the strategy $X^\xi$. The expectation of $\int_0^T\phi(S_t)|X^\xi_t|^p\,dt$ can be regarded as a measure for the risk associated with holding the position $X^\xi_t$ at time $t$; see~\cite{AlmgrenSIFIN,Forsythetal,Tseetal}
and the discussion in Section 1.2 of~\cite{SchiedFuel}. Similarly, the expectation of the term $\varrho(S_T)|X^\xi_T|^p$ can be viewed as a penalty for still keeping the position $X^\xi_T$ at the end of the trading horizon. In view of the preceding two terms, it is natural to take $x_0<0$ when buying and $x_0>0$ when selling. A strategy minimizing the functional~\eqref{cost functional} can then be interpreted as a trading strategy that liquidates an optimal share of the initial inventory $x_0$ in a cost-risk-efficient manner over the time interval $[0,T]$. The connections with optimal trading under price impact in target zone models will be further investigated in Section~\ref{approx section}.

\begin{figure}
\begin{center}
\includegraphics[width=11cm]{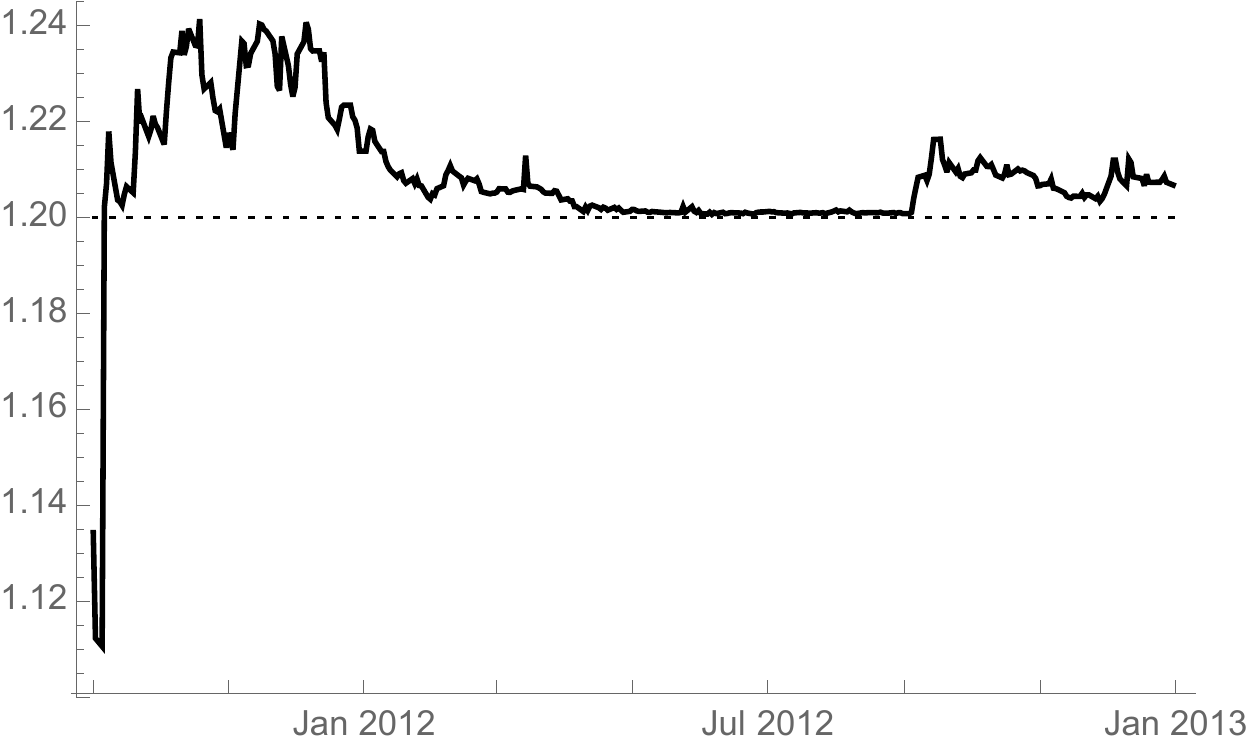}
\caption{Plot of the EUR/CHF exchange rate from  September 1, 2011 through December 31, 2012. On September 6, 2011, the Swiss National Bank announced that it would enforce a minimum exchange rate of 1.20 EUR/CHF. This announcement was  followed by the steep upward spike seen on the left of the plot. }\label{EUR_CHF Fig}
\end{center}
\end{figure}

\begin{figure} \label{GBM7.pdf}
\begin{center}
\includegraphics[width=11cm]{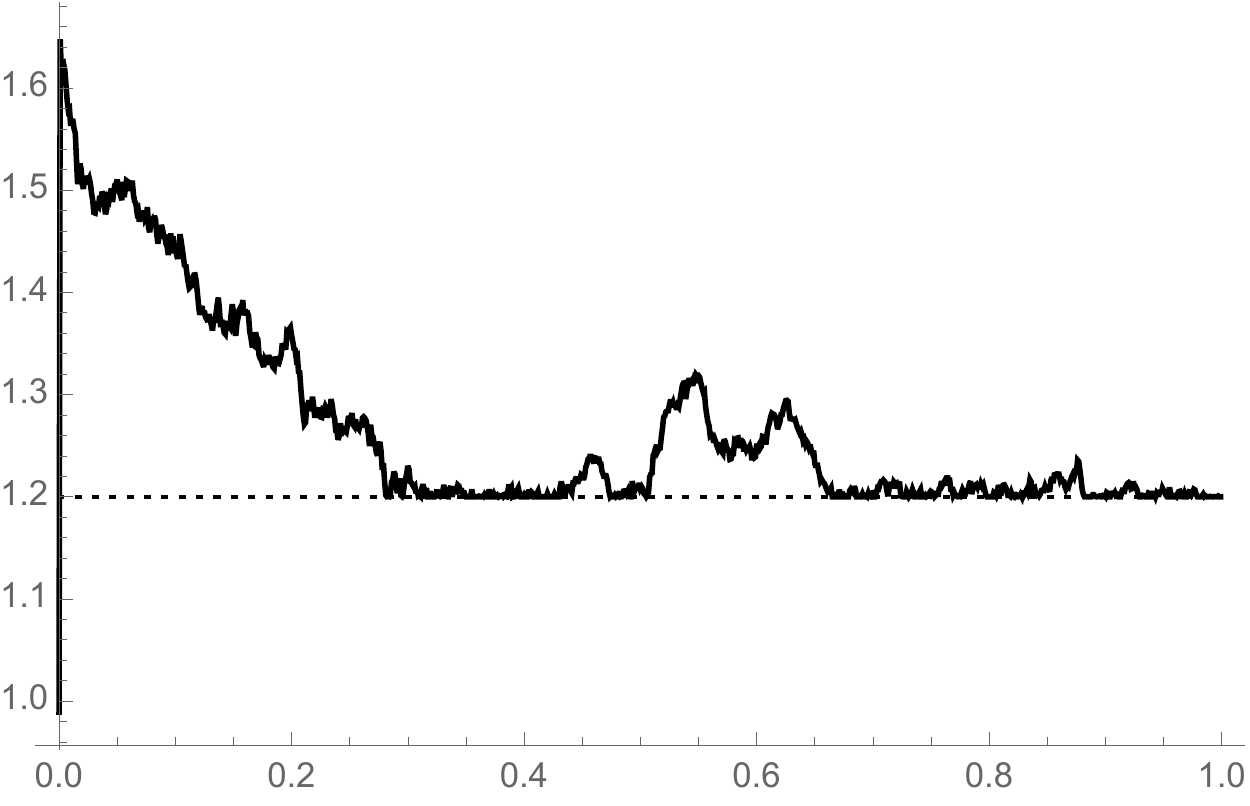}
\caption{Realization of reflected geometric Brownian motion with a reflecting barrier at $c=1.2$.}
\end{center}
\end{figure}

\subsection{Solution of the control problem} \label{sec-sol} 

Our first goal is to solve the problem of minimizing the functional~\eqref{cost functional}. In particular, we will show existence and uniqueness of minimizers and provide a probabilistic representation of the optimal strategy. This representation is based on the following concept of a \emph{catalytic superprocess} with point catalyst $c$, which was introduced by Dawson and Fleischmann~\cite{DF94}.
The following proposition is based on the general existence result of~\cite[Theorem III.4.1]{DynkinBranchingBook} (see also~\cite{Dynkin95}).  By $\cM=\cM(\bR)$ we will denote the space of nonnegative Borel measures on $\bR$, and we will use the shorthand notation  $\<f,\mu\>$ for  $\int f\,d\mu$.

\begin{proposition}\label{superprocess prop}For   $\beta\in(0,1]$, there exists a Markov process $(\wt\Omega,\cG,(\cG_t)_{t\ge0},(\bP_\mu)_{\mu\in\cM},(Y_t)_{t\ge0})$ taking values in $\cM$  such that for each $T>0$, bounded measurable functions $f,g\ge0$, and $\mu\in\cM$,
$$
\bE_\mu\Big[\,\exp\Big(-\int_0^T\<g,Y_t\>\,dt-\<f,Y_T\>\Big)\,\Big]=e^{-\<v(T,\cdot),\mu\>},
$$
where the measurable function $v$ is the unique nonnegative solution of the nonlinear integral equation
\begin{align}\label{logLaplace eqn}
v(t,z)=E_z\Big[\,f(S_t)+\int_0^tg(S_s)\,ds\,\Big]-E_z\Big[\,\frac1\beta\int_0^tv(t-s,S_s)^{1+\beta}\,L(ds)\,\Big].
\end{align}

\end{proposition}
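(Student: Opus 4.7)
The plan is to derive Proposition~\ref{superprocess prop} as a direct application of Dynkin's general existence theorem for $(\xi,\psi,K)$-super\-processes, \cite[Theorem III.4.1]{DynkinBranchingBook}, specialised to the one-particle motion $\xi=S$, the stable branching mechanism $\psi(u)=u^{1+\beta}/\beta$, and the continuous additive functional $K_t:=L_t$, the local time of $S$ at $c$. The catalytic character is built into the choice $K=L$: since $L$ increases only at times when $S$ sits at $c$, particles in the approximating branching system may only split at the catalyst point $c$, which is exactly the structure considered in~\cite{DF94,Del96}. Once the theorem applies, it delivers simultaneously the Markov process $Y$, the announced Laplace-functional formula, and the integral equation~\eqref{logLaplace eqn} for the log-Laplace exponent $v$, which is automatically nonnegative and measurable.

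The bulk of the work is to verify Dynkin's hypotheses for the pair $(S,L)$. By standard results on reflecting diffusions, $L$ is a continuous, nondecreasing, additive functional of $S$ (satisfying $L_{t+s}=L_t+L_s\circ\theta_t$), and $E_z[L_t]$ is bounded uniformly on compact subsets of $(t,z)$-space via the Skorokhod equation combined with moment bounds on $S$. The mechanism $\psi(u)=u^{1+\beta}/\beta$ with $\beta\in(0,1]$ is the classical stable nonlinearity --- convex, nondecreasing, vanishing at $0$, and locally Lipschitz on $\bR_+$ --- hence belongs to Dynkin's admissible class. For bounded measurable $f,g\geq 0$, the inhomogeneous expectation $F(t,z):=E_z[f(S_t)+\int_0^t g(S_s)\,ds]$ is bounded and measurable by Fubini and the Markov property of $S$. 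These are precisely the ingredients needed to invoke~\cite[Theorem III.4.1]{DynkinBranchingBook}; its extension to one-particle motions beyond Brownian motion is given in~\cite{Dynkin95}.

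It remains to argue uniqueness of the nonnegative measurable solution of~\eqref{logLaplace eqn}, which I would handle by a standard Gr\"onwall-type argument. Applying Jensen's inequality to the Laplace-functional identity with $\mu=\delta_z$ yields the a priori bound $v\leq F\leq M:=\|f\|_\infty+T\|g\|_\infty$ for any nonnegative solution. If $v_1$ and $v_2$ are two such solutions, the elementary estimate $|a^{1+\beta}-b^{1+\beta}|\leq(1+\beta)M^\beta|a-b|$ on $[0,M]$ gives
\[
|v_1-v_2|(t,z)\leq \frac{1+\beta}{\beta}M^\beta\,E_z\!\left[\int_0^t |v_1-v_2|(t-s,S_s)\,L(ds)\right],
\]
and iterating this inequality $n$ times, using uniform boundedness of the iterated moments of $L_t$ on compact sets in $z$, forces $v_1=v_2$ as $n\to\infty$.

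The main obstacle I foresee is exactly the verification of Dynkin's regularity and moment conditions on $L$ when the one-particle motion has unbounded --- and possibly exponentially growing --- state space, as for reflected geometric Brownian motion with drift. There one must control $z\mapsto E_z[L_t]$ and the continuity of the associated semigroup with enough uniformity in $z$ to feed the abstract machinery of~\cite{Dynkin95}. Once that step has been carried out, the remainder of the argument is essentially mechanical: existence of $Y$ and of $v$ is produced by the theorem, while uniqueness of $v$ reduces to the short Gr\"onwall argument above.
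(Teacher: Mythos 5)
Your proposal takes essentially the same route as the paper: both reduce the statement to Dynkin's \cite[Theorem III.4.1]{DynkinBranchingBook} with one-particle motion $S$, stable mechanism $\psi(u)=u^{1+\beta}/\beta$, and the local time $L$ as the admissible additive functional, the mixed $(f,g)$-functional and equation~\eqref{logLaplace eqn} then coming from Dynkin's additive-functional machinery. The verification you flag as the main obstacle is dispatched in the paper by the single identity $E_z[L_t]=\int_0^t p_s(z,c)\,ds$, and uniqueness of $v$ is likewise quoted from Dynkin (Theorem 3.4.2), so your separate Gr\"onwall argument --- which, as written, only contracts for small $t$ and would need a time-stepping refinement --- is not required.
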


\bigskip

Now we can state our first main result concerning the minimization of the cost functional~\eqref{cost functional}. Recall that $\phi:\bR\to\bR_+$ is bounded and measurable, $\varrho:\bR\to\bR_+$ is bounded and continuous, and $p\in[2,\infty)$. Define $\beta\in(0,1]$ by $\beta:=1/(p-1)$ and consider the corresponding catalytic superprocess from the preceding proposition. We let 
\begin{align}\label{u def eq}
u(t,z):=-\log \bE_{\delta_z}\Big[\,\exp\Big(-\int_0^t\<\phi,Y_s\>\,ds-\<\varrho,Y_t\>\Big)\,\Big]
\end{align}
so that $u$ solves~\eqref{logLaplace eqn}
 for $\varrho$ replacing $f$ and $\phi$ replacing $g$.  \medskip   \\
 We say that $\xi^*$ is the $dL_t\otimes P_z$-a.e.~unique strategy in $\cX$ that is minimizing the cost functional~\eqref{cost functional}, if any two strategies in $\cX$ which minimize~\eqref{cost functional} are equal $dL_t\otimes P_z$-almost everywhere.
 
 \begin{theorem}\label{main thm}With the notation introduced above, let
 \begin{align}\label{optimal strategy formula}
 \xi^*_t:=-x_0\exp\Big(-\int_0^tu(T-s,S_s)^\beta\,dL_s\Big)u(T-t,S_t)^\beta
 \end{align}
 so that
 $$X^{\xi^*}_t=x_0\exp\Big(-\int_0^tu(T-s,S_s)^\beta\,dL_s\Big).
 $$
 Then $\xi^*$ is the $dL_t\otimes P_z$-a.e.~unique strategy in $\cX$ minimizing the cost functional~\eqref{cost functional}. Moreover, the minimal cost is given by 
$$E_z\bigg[\,\int_0^T|\xi^*_t|^p\,L(dt)+\int_0^T\phi(S_t)|X^{\xi^*}_t|^p\,dt+\varrho(S_T)|X^{\xi^*}_T|^p\,\bigg]=|x_0|^pu(T,z).
$$
 \end{theorem}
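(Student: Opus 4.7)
The plan is a verification-type argument: I would use the integral equation \eqref{logLaplace eqn} for $u$ to produce a suitable martingale, then apply an integration-by-parts identity to $u(T-t,S_t)\,|X_t^\xi|^p$ for an arbitrary $\xi\in\cX$, and finally minimize the resulting integrand pointwise in $\xi_t$. The candidate minimizer turns out to be exactly \eqref{optimal strategy formula}, and the pointwise minimum equals zero, which will yield both the value formula and uniqueness.

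The first step is to show that with $\phi$ playing the role of $g$ and $\varrho$ playing the role of $f$, equation \eqref{logLaplace eqn} together with the Markov property of $S$ implies that
\begin{equation*}
M_t:=u(T-t,S_t)+\int_0^t\phi(S_s)\,ds-\frac1\beta\int_0^tu(T-s,S_s)^{1+\beta}\,dL_s
\end{equation*}
is a $P_z$-martingale on $[0,T]$. Boundedness of $\phi$ and $\varrho$ together with \eqref{logLaplace eqn} gives the uniform bound $\|u\|_\infty\le\|\varrho\|_\infty+T\|\phi\|_\infty$, which lets me control the various terms after a standard localization. In particular $u$ is bounded, from which $\xi^*\in\cX$ and $|X^{\xi^*}|\le|x_0|$ follow directly from the formula.

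The second step is the key computation. Since $X^\xi$ has continuous paths of bounded variation, $|X^\xi|^p$ is absolutely continuous with $d|X_t^\xi|^p=p|X_t^\xi|^{p-1}\operatorname{sgn}(X_t^\xi)\xi_t\,dL_t$ (interpreting the integrand as $0$ at $X_t^\xi=0$, which is legitimate because $p\ge2$ makes $x\mapsto|x|^p$ a $C^1$ function). The classical Stieltjes integration-by-parts formula applied to the product $u(T-t,S_t)\cdot|X_t^\xi|^p$, combined with the decomposition $du(T-t,S_t)=dM_t-\phi(S_t)\,dt+\frac1\beta u(T-t,S_t)^{1+\beta}\,dL_t$, leads after taking expectations and using $u(0,\cdot)=\varrho(\cdot)$ to
\begin{equation*}
E_z\!\left[\varrho(S_T)|X_T^\xi|^p+\!\int_0^T\!\phi(S_t)|X_t^\xi|^p\,dt\right]=|x_0|^pu(T,z)+E_z\!\left[\int_0^T\!\Big(\tfrac1\beta u^{1+\beta}|X_t^\xi|^p+pu\,|X_t^\xi|^{p-1}\operatorname{sgn}(X_t^\xi)\,\xi_t\Big)dL_t\right],
\end{equation*}
where $u=u(T-t,S_t)$. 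Adding the impact cost $E_z\!\int_0^T|\xi_t|^p\,dL_t$ to both sides gives the identity
\begin{equation*}
J(\xi)=|x_0|^pu(T,z)+E_z\!\int_0^T\!H\bigl(\xi_t,X_t^\xi,u(T-t,S_t)\bigr)\,dL_t,\qquad H(y,x,w):=|y|^p+\tfrac1\beta w^{1+\beta}|x|^p+pw\,|x|^{p-1}\operatorname{sgn}(x)\,y.
\end{equation*}

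The third step is purely algebraic. For fixed $x$ and $w\ge0$ the map $y\mapsto H(y,x,w)$ is strictly convex (since $p\ge2$) and the first-order condition yields the unique minimizer $y=-w^\beta x$, precisely the feedback form of $\xi^*$ in \eqref{optimal strategy formula}. Plugging this back and using $p\beta=1+\beta$ and $1/\beta=p-1$ gives $\min_y H(y,x,w)=w^{1+\beta}|x|^p\bigl(1+(p-1)-p\bigr)=0$. Hence $J(\xi)\ge|x_0|^pu(T,z)$ for every $\xi\in\cX$, with equality precisely when $\xi_t=-X_t^\xi u(T-t,S_t)^\beta$ for $dL_t\otimes P_z$-a.e.\ $(t,\omega)$; solving this linear ODE driven by $dL_t$ for $X^\xi$ reproduces the explicit formula for $X^{\xi^*}$ and thereby $\xi^*$, giving $dL_t\otimes P_z$-a.e.\ uniqueness.

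The main obstacle I anticipate is the low regularity of $u$: it is defined only as a measurable solution of \eqref{logLaplace eqn} and is a priori nowhere differentiable in $t$ or $z$, so one cannot literally invoke Itô's formula. The martingale decomposition of $u(T-t,S_t)$ therefore has to be obtained directly from \eqref{logLaplace eqn} and the Markov property, not via an HJB equation. A secondary technical point is justifying that the local martingale $\int_0^\cdot|X_s^\xi|^p\,dM_s$ arising from the integration-by-parts step has zero expectation at time $T$; this should follow from the uniform bound $|X_t^\xi|\le|x_0|+\int_0^T|\xi_s|\,dL_s$ together with the defining integrability of $\xi\in\cX$ and a standard localization along stopping times that reduce $M$ to a bounded martingale.
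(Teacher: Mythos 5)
Your core computation is exactly the paper's: the martingale $M_t=u(T-t,S_t)+\int_0^t\phi(S_s)\,ds-\frac1\beta\int_0^tu(T-s,S_s)^{1+\beta}\,dL_s$ obtained from the Markov property and the integral equation, the product/It\^o step applied to $u(T-t,S_t)|X^\xi_t|^p$, and your pointwise function $H$ is precisely the paper's Young-inequality quantity $\Phi_p(|\xi_t|,X_tu(T-t,S_t)^{\beta})\ge0$ with equality iff $\xi_t=-X^\xi_tu(T-t,S_t)^\beta$, followed by a Gronwall-type uniqueness argument for the linear $dL$-driven equation (the paper cites Groh~\cite{Groh}; you should too, or at least note that a measure-driven Gronwall inequality is what is needed). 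The regularity worry you raise about $u$ is resolved in the paper by Proposition~\ref{continuous strategies lemma} (joint continuity of $u$, whose proof is independent of the theorem), which makes $M$ a continuous martingale via~\eqref{u cond exp}; citing that proposition closes your ``obstacle'' cleanly.

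The genuine gap is in your final paragraph: for an arbitrary $\xi\in\cX$ the class only guarantees $\int_0^T|\xi_t|\,dL_t<\infty$ $P_z$-a.s., so $X^\xi$ is pathwise bounded but not uniformly bounded, the cost may even be infinite, and neither $\sup_t|X^\xi_t|^p$ nor $\int_0^T|X^\xi_t|^p\,dL_t$ is controlled by the cost functional. Consequently your claim that $\int_0^\cdot|X^\xi_s|^p\,dM_s$ has zero expectation ``by localization'' does not go through: after stopping at $\tau_n$ you only get $E_z[C_{T\wedge\tau_n}]\ge|x_0|^pu(T,z)$, and passing to $E_z[C_T]$ requires uniform integrability of the stopped value terms $|X^\xi_{T\wedge\tau_n}|^pu(T-T\wedge\tau_n,S_{T\wedge\tau_n})$ (Fatou gives the inequality in the wrong direction), which is not available for general $\xi$. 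The paper avoids this by first invoking Lemma~\ref{lemma-mon-strat}: any strategy whose inventory is non-monotone or changes sign is strictly dominated by one with $X$ monotone, sign-constant, hence $|X^\xi_t|\le|x_0|$; with this reduction the stochastic-integral term is harmless and the terminal identification $u(T-t,S_t)\to\varrho(S_T)$ (Lemma~\ref{lemma-q}) completes the argument. You need this reduction (or an equivalent truncation argument proving that one may restrict to strategies with $|X^\xi|\le|x_0|$) before your verification identity can be turned into the inequality $J(\xi)\ge|x_0|^pu(T,z)$ for all of $\cX$; as written, that step would fail.
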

 

\subsection{Temporary price impact and a discrete-time approximation of the cost functional}\label{approx section}

In this section we explore in greater detail the interpretation of the term $\int_0^T|\xi_t|^p\,L(dt)$ in~\eqref{cost functional} as the cost arising from the temporary price impact created by the strategy $X^\xi$. The notion of temporary price impact originated in the discrete-time framework of~\cite{BertsimasLo,AlmgrenChriss2} and was later conveyed to continuous time by means of a limiting procedure~\cite{Almgren}. Here, we will follow a similar line of reasoning. To this end, we need discrete-time approximations of the asset price process $S$. Natural approximations are available in two cases, namely when $S$ is a reflecting arithmetic Brownian motion with drift, or when $S$ is a reflecting geometric Brownian motion. Arithmetic Brownian motion can be approximated by random walks, geometric Brownian motion by binomial models. For this reason, we will focus here on the two respective cases in which $S$ is a reflecting arithmetic or geometric Brownian motion. 
\bigskip \\
{\bf Case 1: $\bm S$ is a reflecting arithmetic Brownian motion with drift}. That is,
$$S_t=S_0+\sigma B_t+bt+U_t,
$$
where $S_0\in \bR$ is the starting point,  $\sigma>0$ is a volatility parameter, $b\in\bR$ is the drift, $B$ is a standard Brownian motion, and $U$ is the unique continuous, increasing, and adapted process such that  the measure $dU_t$ is supported on $\{t\,|\,S_t=c\}$ and  $S_t\ge c$ (the case in which $c$ is an upper barrier for $S$, i.e., $S_t\le c$, can be handled in the same manner). 
 We now recursively define for 
 $n\in\bN$ fixed the following sequence of stopping times,
 \begin{align}\label{taunkfor ABM}
 \tau_0^{(n)}:=\inf\big\{t\ge0\,\big|\,S_t\in c+2^{-n}\bZ\big\},\qquad \tau_{k}^{(n)}:=\inf\big\{t>\tau_{k-1}^{(n)}\,\big|\,|S_t-S_{\tau_{k-1}^{(n)}}|=2^{-n}\big\}.
 \end{align}
Then we introduce the discretized price process
\begin{align}\label{Sn definition}
S^{(n)}_k:=S_{\tau_k^{(n)}},\qquad k=0,1,\dots
\end{align}
Then $S^{(n)}$ is a reflecting random walk with drift. The (normalized) local time of $S^{(n)}$ in $c$ is usually defined as
\begin{align} \label{bm-dics-lt} 
\ell^{(n)}_k:=2^{-n}\sum_{i=0}^k\sigma\Ind{\{S^{(n)}_i=c\}},
\end{align} 
(see, e.g.,~\cite{LeGallDEA}).
\bigskip\\ 
{\bf Case 2:  $\bm S$ is a reflecting geometric Brownian motion}, i.e., $S$ solves the stochastic differential equation
\begin{align}\label{reflecting GBM}
dS_t= \sigma S_t\,dB_t+b S_t\,dt+dU_t
\end{align}
with starting point $S_0>0$, where $\sigma>0$ is the volatility, $B$ is a standard Brownian motion, $b\in\bR$ is a drift parameter, and  $U$ is the unique continuous, increasing, and adapted process such that the measure $dU_t$ is supported on $\{t\,|\,S_t=c\}$ and $S_t\ge c$ (again, the case $S_t\le c$ is analogous); see, e.g.,~\cite{Skorohod}. In this context, we let
$$\tau_0^{(n)}:=\inf\big\{t\ge0\,\big|\,\log S_t\in \log c+2^{-n}\bZ\big\},\qquad \tau_{k}^{(n)}:=\inf\bigg\{t>\tau_{k-1}^{(n)}\,\Big|\frac{S_t}{S_{\tau_{k-1}^{(n)}}}\in\{d_n,u_n\}\bigg\},
$$
where $d_n=e^{-2^{-n}}$ and $u_n=e^{2^{n}}$. If we now define $S^{(n)}$ as in~\eqref{Sn definition}, then $S^{(n)}$  is a binomial model with parameters $d_n$ and $u_n$, reflecting at $c$. The normalized local time of $S^{(n)}$ in $c$ will be defined as 
$$\ell^{(n)}_k:=2^{-n}\sum_{i=0}^k\sigma c\Ind{\{S^{(n)}_i=c\}}.
$$

The following  convergence result, which is based on the corresponding well-known result for ordinary Brownian motion,  states in that the discrete-time price processes $S^{(n)}$ and the corresponding local times $\ell^{(n)}$ approximate the continuous-time processes $S$ and $L$ after a suitable time change.

\begin{lemma}\label{Skorohod lemma} Under the assumptions of Case 1 or  Case 2, we have that for each $T>0$,  $P_z$-a.s.,
$$
\sup_{t\le T}\big|S^{(n)}_{\lfloor 2^{2n}t\rfloor}-S_{t}\big|\lra0\qquad\text{and}\qquad \sup_{t\le T}\big|\ell^{(n)}_{\lfloor 2^{2n}t\rfloor}-L_{t}\big|\lra0.
$$
\end{lemma}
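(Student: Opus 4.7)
The plan is to reduce to a classical convergence result for the dyadic embedding of random walks in arithmetic Brownian motion, and then to transfer that result to the reflected setting via the Lipschitz continuity of Skorohod's reflection map.

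First, I would reduce Case~2 to Case~1 by the log transformation. Applying It\^o's formula to the SDE~\eqref{reflecting GBM} shows that $\log S_t$ is a reflecting arithmetic Brownian motion with drift $b-\sigma^2/2$, volatility $\sigma$, and lower barrier $\log c$. The grid $\log c + 2^{-n}\bZ$ and the stopping times $\tau^{(n)}_k$ used in Case~2 coincide with those of Case~1 applied to $\log S$. Tanaka's formula further gives $dL_t = c\, d\wt L_t$, where $\wt L$ is the local time of $\log S$ at $\log c$; this accounts for the extra factor $c$ appearing in the Case~2 normalization of $\ell^{(n)}$. So it suffices to establish the lemma in Case~1.

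In Case~1, introduce the unreflected process $M_t := S_0 + \sigma B_t + bt$. By Skorohod's reflection lemma,
$$S_t = M_t + U_t, \qquad U_t = \sup_{s\le t}(c - M_s)^+,$$
and $\sigma\, dL_t = dU_t$ by Tanaka's formula. The map $\Phi: f \mapsto (f + \sup_{s\le\cdot}(c - f_s)^+,\ \sup_{s\le\cdot}(c - f_s)^+)$ is Lipschitz in the uniform norm on $[0,T]$. The strategy is to write $(S^{(n)},\ell^{(n)})$, up to asymptotically negligible errors, as the image under a discrete analog of $\Phi$ of a dyadic random walk embedded in $M$. For this unreflected embedding, a classical argument (see e.g.~\cite{LeGallDEA} for the driftless case) gives the $P_z$-almost sure uniform convergence of $M$ evaluated along its dyadic embedding times to $M_t$. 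This rests on two ingredients: the uniform convergence on $[0,T]$ of the embedded time change to the identity, obtained from a strong law for the stationary increments of the embedding times combined with the quadratic variation identity $\langle M\rangle_t = \sigma^2 t$; and the uniform continuity of $M$ on $[0,T]$. The drift $bt$ contributes only a lower-order perturbation that is absorbed by the continuity modulus of $M$.

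The main technical obstacle is that the stopping times $\tau^{(n)}_k$ are defined in terms of $S$ rather than of $M$, so one must verify pathwise that the reflected random walk embedded in $S$ coincides, up to a vanishing error, with the discrete Skorohod reflection of a dyadic random walk embedded in $M$; likewise, the counting local time $\ell^{(n)}$ must be identified with the discrete analog of the reflection term $U$, the factor $\sigma$ in~\eqref{bm-dics-lt} being exactly what converts counts of $2^{-n}$-scale dyadic visits to $c$ into increments of the semimartingale local time. Once this pathwise identification is in place, the Lipschitz continuity of $\Phi$ promotes the unreflected uniform convergence from the previous paragraph to the joint uniform convergence of $(S^{(n)},\ell^{(n)})$ to $(S,L)$ on $[0,T]$, which is the assertion of the lemma.
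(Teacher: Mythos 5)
Your reduction of Case~2 to Case~1 (log transform, $L_t=c\widetilde L_t$) is exactly the paper's Lemma~\ref{lemma-RGBM}, and that part is fine. The problem is Case~1. The step you yourself call ``the main technical obstacle'' --- identifying $(S^{(n)},\ell^{(n)})$, up to vanishing error, with the discrete Skorohod reflection of a dyadic random walk embedded in the unreflected motion $M_t=S_0+\sigma B_t+bt$ --- is in this approach the whole content of the lemma, and you never carry it out; moreover, as an exact statement it is false. The stopping times $\tau^{(n)}_k$ are built from the \emph{reflected} process $S$, not from $M$: on a step starting at the barrier ($S_{\tau^{(n)}_{k-1}}=c$) one has $M_{\tau^{(n)}_k}-M_{\tau^{(n)}_{k-1}}=2^{-n}-(U_{\tau^{(n)}_k}-U_{\tau^{(n)}_{k-1}})$, which is $2^{-n}$ minus a continuously distributed amount, so $k\mapsto M_{\tau^{(n)}_k}$ is not a dyadic walk; conversely, the dyadic walk embedded in $M$ at $M$'s own exit times uses a different sequence of stopping times, so its discrete reflection at step $\lfloor 2^{2n}t\rfloor$ is not the object $S^{(n)}_{\lfloor 2^{2n}t\rfloor}$ of the lemma, and comparing the two requires controlling two distinct time changes --- an argument you do not give. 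Furthermore, Lipschitz continuity of the Skorohod map would at best convert uniform convergence of unreflected paths into convergence of the reflected paths and of the \emph{reflection terms}; it does not give convergence of the counting local time $\ell^{(n)}$ (a visit count) to $L$. Relating visit counts to the discrete reflection term is a discrete L\'evy/Tanaka statement needing its own proof, and the convergence of the discrete local time is precisely the nontrivial part of the classical result you cite (Le Gall's Th\'eor\`eme~3.2). Finally, the drift is dismissed as ``a lower-order perturbation absorbed by the continuity modulus,'' but the classical embedding and local-time results you invoke are stated for driftless Brownian motion, and the local-time part with drift is not addressed at all.

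For contrast, the paper sidesteps every pathwise identification: since $S^{(n)}$, $\ell^{(n)}$, and $L$ are measurable functionals of the path of $S$, an identity in law suffices for the almost-sure statement. Peskir's theorem represents the joint law of $(S,L)$ through a drifted Brownian motion and its running maximum, Girsanov's theorem (equivalence of laws on $[0,T]$) removes the drift, and then $(S,L)$ has the law of $(|S_0+B|,L^{-S_0}(B))$, so Le Gall's Proposition~3.1 and Th\'eor\`eme~3.2 apply verbatim. If you want to keep your Skorohod-map route, you must (i) prove the comparison of the two embeddings and their time changes, (ii) prove a discrete Tanaka/L\'evy identification of the visit count with the reflection term, and (iii) handle the drift; each is a substantive missing step.
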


We now define a discrete-time approximation, $X^{\xi,(n)}$, of the inventory process  $X^{\xi}_t=x_0+\int_0^t\xi_s\,dL_s$ for a fixed control $\xi\in\cX$. For simplicity we will focus on the case in which  $\xi_t(\om)$ is, for $P_z$-a.e. $\om\in\Om$, a continuous function of $t\in\bR_+$. The class of all such controls will be denoted by $\cX_c$. This assumption is justified by the following result.

\begin{proposition}\label{continuous strategies lemma}The function $u(t,z)$ defined in~\eqref{u def eq} is jointly continuous in $t$ and $z$. In particular, the optimal strategy $\xi^*$  from Theorem~\ref{main thm} belongs to $\cX_c$.
\end{proposition}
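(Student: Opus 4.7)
The plan is to derive joint continuity of $u$ from the integral equation~\eqref{logLaplace eqn} together with the uniqueness of nonnegative solutions asserted in Proposition~\ref{superprocess prop}. I would first record the a priori bound: since $u\ge 0$ (as a log-Laplace functional) and the nonlinear term on the right of~\eqref{logLaplace eqn} is nonnegative, dropping it yields
$$0\le u(t,z)\le E_z[\varrho(S_t)]+E_z\Big[\int_0^t\phi(S_s)\,ds\Big]\le \|\varrho\|_\infty+T\|\phi\|_\infty=:M.$$

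I would then view $u$ as a fixed point of the (truncated) operator
$$(\Psi v)(t,z):=E_z\Big[\varrho(S_t)+\int_0^t\phi(S_s)\,ds\Big]-\frac1\beta E_z\Big[\int_0^t\big((v(t-s,S_s)\wedge M)_+\big)^{1+\beta}\,dL_s\Big]$$
acting on bounded measurable functions on $[0,T]\times\bR$ equipped with the weighted norm $\|v\|_\lambda:=\sup_{t,z}e^{-\lambda t}|v(t,z)|$. The truncation makes the nonlinearity globally Lipschitz with constant $(1+\beta)M^\beta$, and together with the linear-in-$t$ estimate on $\sup_z E_z[L_t]$ this yields $\|\Psi v_1-\Psi v_2\|_\lambda\le K(\lambda)\|v_1-v_2\|_\lambda$ with $K(\lambda)\to 0$ as $\lambda\to\infty$. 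For $\lambda$ large, Banach's theorem gives a unique fixed point of $\Psi$, both in $C_b([0,T]\times\bR)$ (provided $\Psi$ preserves continuity, the main technical step) and in bounded measurable functions. Since $0\le u\le M$, the truncation has no effect on $u$, so $u$ is itself a bounded measurable fixed point; by uniqueness, $u$ coincides with the continuous fixed point and is therefore jointly continuous.

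The core step is showing $\Psi$ preserves joint continuity. For the linear part, continuity of $(t,z)\mapsto E_z[\varrho(S_t)]+E_z[\int_0^t\phi(S_s)\,ds]$ follows from the strong Feller property of $S$, which is available in the settings of interest (reflected Brownian motion or reflected geometric Brownian motion) because the transition semigroup admits a smooth density; Fubini and dominated convergence then handle the time integral. For the local-time term, the key observation is that $dL_s$ is supported on $\{S_s=c\}$, so the integrand reduces to the deterministic expression $h_v(t-s):=((v(t-s,c)\wedge M)_+)^{1+\beta}$, and Fubini gives
$$E_z\Big[\int_0^t h_v(t-s)\,dL_s\Big]=\int_0^t h_v(t-s)\,dE_z[L_s].$$
Joint continuity of $(s,z)\mapsto E_z[L_s]$ is a standard property of reflected diffusions (via the Skorokhod map or the local-time density), and combined with continuity of $h_v$ this produces joint continuity of $\Psi v$. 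The main obstacle is verifying this local-time continuity uniformly in $z$, which is needed to push the contraction through; this can be handled by splitting $[0,t+h]=[0,t]\cup[t,t+h]$ and using $\sup_z E_z[L_{t+h}-L_t]\to 0$ as $h\downarrow 0$.

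Once joint continuity of $u$ is established, the claim that $\xi^*\in\cX_c$ follows directly from~\eqref{optimal strategy formula}. Indeed, $t\mapsto u(T-t,S_t)$ is continuous $P_z$-a.s.\ as the composition of the continuous function $u$ with the continuous path $t\mapsto(T-t,S_t)$, and the integral $\int_0^t u(T-s,S_s)^\beta\,dL_s$ is continuous in $t$ because $L$ is continuous and the integrand is bounded and continuous. Hence $\xi^*$ is $P_z$-a.s.\ continuous in $t$, i.e., $\xi^*\in\cX_c$.
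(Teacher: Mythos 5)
Your argument is correct in substance, but it follows a genuinely different route from the paper. The paper proves joint continuity case by case: for reflected arithmetic Brownian motion with drift it uses Peskir's representation of $(S,L)$ and a Girsanov change of measure to rewrite the catalytic term as an integral against the Brownian local time $L^{-z}$ (its analogue of your observation that only the value $u(\cdot,c)$ matters on the support of $dL$), then controls the increment $|u(t+\delta,z)-u(t,z)|$ through a measure-driven Volterra inequality resolved by Groh's generalized Gronwall lemma, with Cs\'aki's exponential-moment bound on $\sup_z L^z_T$ guaranteeing finiteness of the relevant constants; continuity in $z$ then follows by dominated convergence, and the geometric case is reduced to the arithmetic one by the logarithmic transformation of Lemma~\ref{lemma-RGBM}. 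You instead run a contraction argument: truncate the nonlinearity using the a priori bound (the paper's~\eqref{u bound}), work in a weighted sup norm on $[0,T]\times\bR$, and exploit that the point catalyst turns the nonlinear term into the deterministic convolution $\int_0^t h_v(t-s)\,dE_z[L_s]$ with $dE_z[L_s]=p_s(z,c)\,ds$ by~\eqref{local time identity}; uniqueness of the fixed point among bounded measurable functions then forces $u$ to coincide with the continuous fixed point. This buys a more self-contained proof that avoids Groh and Cs\'aki and treats $t$ and $z$ symmetrically, whereas the paper's route yields an explicit modulus of continuity in $t$ and leans on classical local-time facts. Two caveats you should make explicit: the contraction constant $K(\lambda)\to0$ and the continuity-preservation step both require a uniform-in-$z$ integrable bound on the local-time kernel near $s=0$ (for the reflected arithmetic/geometric Brownian cases one has $p_s(z,c)\le C_T s^{-1/2}$, and in particular $\sup_z E_z[L_t]$ is of order $\sqrt t$, not linear in $t$ as you state --- only uniform smallness as $t\downarrow0$ is actually needed, so the argument survives); and the geometric Brownian case should either be checked directly for these kernel bounds or reduced to the arithmetic case by the log-transform as in the paper. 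Your final step deducing $\xi^*\in\cX_c$ from~\eqref{optimal strategy formula} matches the paper.
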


 Setting $\ell^{(n)}_{-1}:=0$, we now let 
 \begin{align}\label{discrete time orders eq}
 \xi^{(n)}_k:=\xi_{\tau^{(n)}_k}\qquad\text{and}\qquad X^{\xi,(n)}_N:=x_0+\sum_{k=0}^N\xi^{(n)}_k(\ell^{(n)}_k-\ell^{(n)}_{k-1})
 \end{align}
so that $\xi^{(n)}$ is the speed, relative to the local time $\ell^{(n)}$, at which shares are sold or purchased. The random variable $X^{\xi,(n)}_N$ will consequently describe the resulting inventory of the investor at the $N^{\text{th}}$ time step of the discrete-time approximation. 
 Exactly as in the framework of Almgren and Chriss~\cite{AlmgrenChriss2} and Almgren~\cite{Almgren}, each executed order will thus generate temporary price impact, and this price impact is given by a certain odd function $g$ applied to its trading speed. Here we take $g(x)=\text{sign}(x)|x|^{p-1}$ for some $p>1$, which is consistent with~\cite{Almgren,AlmgrenHauptmanLi,Gatheral}. The transaction costs incurred by an order are consequently given by the number of shares executed times the temporary price impact. Since the execution speed of the $k^{\text{th}}$ order in~\eqref{discrete time orders eq} is $\xi^{(n)}_k$ and the number of shares executed by that order is $\nu_k:=\xi^{(n)}_k(\ell^{(n)}_k-\ell^{(n)}_{k-1})$, its  transaction costs are given by
 $$\nu_kg(\xi^{(n)}_k)=|\xi^{(n)}_k|^p(\ell^{(n)}_k-\ell^{(n)}_{k-1}).
 $$ 
It follows that the total transaction costs incurred by the first $N$ orders in~\eqref{discrete time orders eq} are thus equal to
$$\sum_{k=0}^N|\xi^{(n)}_k|^p(\ell^{(n)}_k-\ell^{(n)}_{k-1}).
$$
The following result now provides the desired financial interpretation of the cost minimization problem solved in Theorem~\ref{main thm}. 

\begin{proposition} \label{convergence prop}Under the above assumptions, we have that $P_z$-a.s.~for each $t>0$,
$$X^{\xi,(n)}_{\lfloor 2^{2n}t\rfloor}\lra X^\xi_t\qquad\text{and}\qquad
\sum_{k=0}^{\lfloor 2^{2n}t\rfloor}|\xi^{(n)}_k|^p(\ell^{(n)}_k-\ell^{(n)}_{k-1})\lra\int_0^t|\xi_s|^p\,L(ds).
$$
\end{proposition}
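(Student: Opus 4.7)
The plan is to reduce both limits to a Stieltjes-continuity argument built on Lemma~\ref{Skorohod lemma}. Introduce the piecewise-constant interpolations $\hat L^{(n)}(s):=\ell^{(n)}_{\lfloor 2^{2n}s\rfloor}$ and the associated time-change $\theta^{(n)}(s):=\tau^{(n)}_{\lfloor 2^{2n}s\rfloor}$, so that $\hat L^{(n)}$ has a jump of size $\ell^{(n)}_k-\ell^{(n)}_{k-1}$ at every grid point $s=k\cdot 2^{-2n}$. The discrete sums appearing in the proposition become
\begin{equation*}
\sum_{k=0}^{\lfloor 2^{2n}t\rfloor}\xi_{\tau^{(n)}_k}\bigl(\ell^{(n)}_k-\ell^{(n)}_{k-1}\bigr)=\xi_{\tau^{(n)}_0}\ell^{(n)}_0+\int_{(0,t]}\xi_{\theta^{(n)}(s)}\,d\hat L^{(n)}(s),
\end{equation*}
and analogously with $|\xi|^p$ in place of $\xi$; the boundary term is of order $2^{-n}$ and will be discarded.

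Two inputs drive the limit. First, Lemma~\ref{Skorohod lemma} gives $\hat L^{(n)}\lra L$ uniformly on $[0,T]$, $P_z$-a.s., and since $L$ is continuous and non-decreasing this is equivalent to weak convergence $d\hat L^{(n)}\lra dL$ of the induced Stieltjes measures; hence $\int_0^t h\,d\hat L^{(n)}\lra\int_0^t h\,dL$ for every bounded continuous $h$. Second, I would establish the companion fact that $\theta^{(n)}(s)\lra s$ uniformly on $[0,T]$, $P_z$-a.s. This is the standard time-change estimate accompanying the Skorohod embedding underlying Lemma~\ref{Skorohod lemma}: in the driftless version of Case~1 the interarrival times $\tau^{(n)}_k-\tau^{(n)}_{k-1}$ are independent with mean $2^{-2n}$ and variance of order $2^{-4n}$, so Kolmogorov's inequality and a Borel-Cantelli argument give $\sup_{k\le 2^{2n}T}|\tau^{(n)}_k-k 2^{-2n}|\lra 0$; the drift and the reflection at $c$ are accommodated by excursion decomposition via the strong Markov property and, where necessary, by a Girsanov change of measure, and Case~2 reduces to Case~1 by taking logarithms.

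Combining the two, the decomposition
\begin{equation*}
\int_{(0,t]}\xi_{\theta^{(n)}(s)}\,d\hat L^{(n)}(s)-\int_0^t\xi_s\,dL_s=\int_{(0,t]}\bigl(\xi_{\theta^{(n)}(s)}-\xi_s\bigr)\,d\hat L^{(n)}(s)+\left(\int_{(0,t]}\xi_s\,d\hat L^{(n)}-\int_0^t\xi_s\,dL\right)
\end{equation*}
sends the first summand to zero by $\sup_{s\le T}|\xi_{\theta^{(n)}(s)}-\xi_s|\lra 0$ (from $\theta^{(n)}\lra\mathrm{id}$ together with the uniform continuity of $\xi$ on a compact enclosing $\theta^{(n)}([0,T])$) and the uniform boundedness of $\hat L^{(n)}(t)$, while the second vanishes by the weak convergence of the measures applied to the bounded continuous function $\xi$. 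Since $x\mapsto|x|^p$ preserves continuity, replacing $\xi$ by $|\xi|^p$ throughout yields the second statement, and $X^{\xi,(n)}_{\lfloor 2^{2n}t\rfloor}\lra X^\xi_t$ is just the first with $p=1$. The main obstacle I foresee is the auxiliary estimate $\theta^{(n)}\lra\mathrm{id}$: one must verify that drift and reflection at $c$ do not destroy the independence of the interarrival times, which is handled by the strong Markov decomposition above; once this is in place, the remainder is a routine weak-convergence argument.
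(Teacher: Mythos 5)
Your argument is correct and follows essentially the same route as the paper: rewrite the discrete sum as a Stieltjes integral of the time-changed integrand $\xi_{\tau^{(n)}_{\lfloor 2^{2n}s\rfloor}}$ against the measures $d\ell^{(n)}_{\lfloor 2^{2n}\cdot\rfloor}$, use Lemma~\ref{Skorohod lemma} for vague convergence of these measures to $dL$, and use the uniform convergence $\tau^{(n)}_{\lfloor 2^{2n}t\rfloor}\to t$ together with continuity of $\xi$ to replace the integrand, then repeat with $|\xi|^p$. The only difference is that you sketch a direct proof of the time-change convergence, whereas the paper obtains it in the remark following Lemma~\ref{Skorohod lemma} by citing the Brownian case from Le Gall's notes and transferring it by the same reductions (Peskir's identity, Girsanov, logarithm) used there.
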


\section{Proofs} \label{sec-proofs}

\begin{proof}[Proof of Proposition~\ref{superprocess prop}] When $p_t(x,y)$ denotes the transition density of the diffusion $S$, we have
\begin{align}\label{local time identity}
E_z[\,L_t\,]=\int_0^tp_s(z,c)\,ds.
\end{align}
Therefore, $L$ is an admissible functional as defined in the beginning of Section 3.3.3 in~\cite{DynkinBranchingBook}, and~\cite[Theorem III.4.1]{DynkinBranchingBook}  yields the existence of the catalytic superprocess.  That we may use a time-homogeneous  formalism follows from Section 1.2.6 in~\cite{DynkinBranchingBook}. Next, by Section 2.4.1 of~\cite{DynkinBranchingBook}, $A(dt):=\varrho(S_T)\delta_T(dt)+\phi(S_t)\,dt$ is a natural additive functional.  From  Section 3.4.5 and  Theorem 3.4.2 in~\cite{DynkinBranchingBook} we get that $v$ satisfies~\eqref{logLaplace eqn}.
\end{proof}

We now prepare for the proof  of Theorem~\ref{main thm} by means of the following two auxiliary lemmas. 

\begin{lemma}  \label{lemma-mon-strat} Suppose $x_0\in\bR$ is given. 
For any  admissible strategy $\xi\in\cX$ for which $X^\xi$ is not monotone or changes sign, there exists another admissible strategy $\eta\in\cX$ that has a strictly lower cost than $\xi$ and for which $X^{\eta}$ is monotone and does not change sign.
\end{lemma}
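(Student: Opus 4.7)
The plan is to construct $\eta$ as the admissible strategy whose inventory $X^\eta$ equals the ``running-infimum projection'' of $X^\xi$, truncated at zero, and then to verify that it strictly improves the cost whenever the hypothesis of the lemma holds. Without loss of generality I assume $x_0>0$; the case $x_0<0$ is symmetric, and for $x_0=0$ the trivial strategy $\eta\equiv 0$ beats any $\xi$ with $X^\xi\not\equiv 0$. I work pathwise and use the Skorokhod-type time change $\tau_\ell:=\inf\{t\ge 0:L_t>\ell\}$, which turns the inventory into the ordinary absolutely continuous function
$$\tilde X(\ell):=X^\xi_{\tau_\ell}=x_0+\int_0^\ell\tilde\xi(r)\,dr,\qquad\tilde\xi(\ell):=\xi_{\tau_\ell},$$
on $[0,L_T]$, and converts the temporary-impact cost into the Lebesgue integral $\int_0^{L_T}|\tilde\xi|^p\,dr$. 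The candidate projection in $\ell$-time is
$$\psi(\ell):=\max\bigl(0,\tilde X^*(\ell)\bigr),\qquad \tilde X^*(\ell):=\inf_{r\le\ell}\tilde X(r),$$
which is non-increasing and takes values in $[0,x_0]$.

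The first technical step is to show that $\psi$ is absolutely continuous with density $\bar\xi:=\psi'$ satisfying $|\bar\xi|\le|\tilde\xi|$ a.e. This reduces to the one-sided inequality
$$0\le\tilde X^*(\ell')-\tilde X^*(\ell)\le\int_{\ell'}^\ell|\tilde\xi(r)|\,dr\qquad(\ell'<\ell),$$
which I would prove by case analysis: if $\tilde X^*$ is not refreshed on $[\ell',\ell]$ the difference is zero; otherwise the new minimum is attained at some $r^*\in[\ell',\ell]$ and $\tilde X^*(\ell')-\tilde X^*(\ell)\le\tilde X(\ell')-\tilde X(r^*)\le\int_{\ell'}^\ell|\tilde\xi|$. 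Truncation at zero only flattens $\psi$ on $[\tau^*,L_T]$, with $\tau^*:=\inf\{\ell:\tilde X^*(\ell)\le 0\}$, and therefore preserves absolute continuity. Reversing the time change, $\eta_s:=\bar\xi(L_s)$ is progressively measurable and lies in $\cX$ by the bound above, and the change-of-variable $\int_0^t f(L_s)\,dL_s=\int_0^{L_t}f(r)\,dr$ gives
$$X^\eta_t=\psi(L_t)=\max\bigl(0,\inf_{s\le t}X^\xi_s\bigr),$$
which is monotone non-increasing and sign-preserving, as required.

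The cost comparison has two easy terms and one harder one. Since $|X^\eta_t|=X^\eta_t\le|X^\xi_t|$ pointwise, both the running penalty $\int_0^T\phi(S_t)|\cdot|^p\,dt$ and the terminal penalty $\varrho(S_T)|\cdot|^p$ do not increase. For the impact term, the pointwise bound $|\bar\xi|\le|\tilde\xi|$ already yields $\int_0^T|\eta|^p\,dL\le\int_0^T|\xi|^p\,dL$. The core obstacle is upgrading this to a \emph{strict} inequality under the lemma's hypothesis, and this is where the argument spends most of its effort. If $X^\xi$ is not monotone non-increasing, then $\tilde\xi>0$ on a set of positive measure; at a.e.\ such $\ell$, differentiability of $\tilde X$ with derivative $\tilde\xi(\ell)>0$ gives $\tilde X(\ell-\varepsilon)<\tilde X(\ell)$ for small $\varepsilon>0$, so $\tilde X^*(\ell)<\tilde X(\ell)$ and $\ell$ lies in the open set $\{\tilde X>\tilde X^*\}$. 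On every open interval $(a,b)\subset\{\tilde X>\tilde X^*\}$ the strictly positive continuous gap $\tilde X-\tilde X^*$ is bounded below on compact sub-intervals, which forces $\tilde X^*\equiv\tilde X^*(a)$ on $(a,b)$ and hence $\bar\xi\equiv 0$ there. Thus $|\bar\xi|=0<|\tilde\xi|$ on the positive-measure set $\{\tilde\xi>0\}$. If instead $X^\xi$ changes sign then $\tau^*<L_T$ and $\tilde X^*$ strictly decreases on $(\tau^*,L_T)$ from $0$ to a negative value; this forces $\tilde\xi$ to be nontrivial on a positive-measure subset of $(\tau^*,L_T)$, while $\bar\xi\equiv 0$ there by the truncation, again yielding $|\bar\xi|<|\tilde\xi|$ on a set of positive measure. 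Since the lemma's hypothesis guarantees at least one of these two scenarios, the total cost strictly decreases.
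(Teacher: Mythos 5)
Your argument is correct, but it takes a genuinely different route from the paper. The paper's proof is a two-line construction: for $x_0>0$ it simply discards the buying part of $\xi$ and stops once the inventory would hit zero, i.e.\ it takes $\eta_t:=-\xi^-_t\Ind{\{t\le\tau\}}$ with $\tau$ the first time the cumulative sales $\int_0^t\xi^-_s\,dL_s$ reach $x_0$. This gives immediately $|\eta_t|\le|\xi_t|$ pointwise, $0\le X^\eta_t\le |X^\xi_t|$, monotonicity and nonnegativity of $X^\eta$, and strictness because either some buying ($\xi^+>0$ on a set of positive $dL\otimes P_z$-measure) or some selling past zero is dropped --- no time change and no regularity lemma are needed. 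You instead project the inventory itself, taking $X^\eta_t=\max\bigl(0,\inf_{s\le t}X^\xi_s\bigr)$, and you pass through the local-time change of variables to prove absolute continuity of the running infimum with $|\bar\xi|\le|\tilde\xi|$ a.e., plus the locally-constant-on-$\{\tilde X>\tilde X^*\}$ argument for strictness. Your construction actually produces an even cheaper competitor (it trades only when the running minimum is refreshed), but at the price of heavier machinery; in particular the progressive measurability of $\eta_s=\bar\xi(L_s)$ deserves an explicit word (e.g.\ realize $\bar\xi(L_s)$ as a limsup of left difference quotients $h^{-1}\bigl(\psi(L_s)-\psi(L_s-h)\bigr)$, which are $\cF_s$-measurable and continuous in $s$ since $\tau_{L_s-h}\le s$), and your phrase that $\tilde X^*$ ``strictly decreases on $(\tau^*,L_T)$'' should be softened to ``decreases somewhere on $(\tau^*,L_T)$,'' which is all your argument uses. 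Both proofs share the final step implicitly: the pathwise inequality is strict on an event of positive probability, whence the expected cost is strictly smaller (assuming, as the paper tacitly does, that the cost of $\xi$ is finite).
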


\begin{proof}We only consider the case $x_0>0$. Let $\xi\in\cX$ be an admissible strategy for which $X^\xi$ is not monotone or changes sign. Define the stopping time $$\tau:=\inf\Big\{t\ge0\,\Big|\,\int_0^t\xi_s^-\,L(ds)=-x_0\Big\},
$$
where $\xi^-$ denotes the negative part of $\xi$. 
Then 
$\eta_t:=-\xi^-_t\Ind{\{ t\le\tau\}}$ 
is the desired admissible strategy with strictly lower cost than $\xi$.
\end{proof}

\begin{lemma} \label{lemma-q}
With the notations of Theorem~\ref{main thm},
$$u(T-t,S_t)\lra \varrho(S_T)\qquad\text{ $P_{z}$-a.s.~as $t \uparrow T$.}
$$
\end{lemma}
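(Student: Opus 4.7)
My plan is to apply the nonlinear integral equation~\eqref{logLaplace eqn} satisfied by $u$ (specialized to $f=\varrho$, $g=\phi$) with time horizon $T-t$ and starting point $S_t$, obtaining the decomposition
\begin{equation*}
u(T-t,S_t) \;=\; A_t + B_t - C_t,
\end{equation*}
where (writing $\tilde S$ for a fresh copy of the diffusion started at $S_t$, with local time $\tilde L$)
\begin{align*}
A_t &:= E_{S_t}\bigl[\varrho(\tilde S_{T-t})\bigr], \qquad
B_t := E_{S_t}\Bigl[\int_0^{T-t}\phi(\tilde S_s)\,ds\Bigr],\\
C_t &:= \frac{1}{\beta}\, E_{S_t}\Bigl[\int_0^{T-t} u(T-t-s,\tilde S_s)^{1+\beta}\,\tilde L(ds)\Bigr].
\end{align*}
I would then show separately that $A_t \to \varrho(S_T)$ and $B_t, C_t \to 0$ $P_z$-a.s.~as $t\uparrow T$.

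First I would derive a uniform bound on $u$. Applying Jensen's inequality $e^{-\bE[\Psi]}\leq \bE[e^{-\Psi}]$ in the definition~\eqref{u def eq}, together with the first-moment formula $\bE_{\delta_z}[\<f,Y_t\>]=E_z[f(S_t)]$ for the superprocess, yields
\begin{equation*}
0\leq u(t,z)\leq E_z[\varrho(S_t)]+\int_0^t E_z[\phi(S_s)]\,ds \;\leq\; T\|\phi\|_\infty+\|\varrho\|_\infty \;=:\; K
\end{equation*}
for all $t\in[0,T]$ and $z\in\bR$.

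The term $B_t$ is then trivially bounded by $(T-t)\|\phi\|_\infty\to 0$. For $C_t$, the bound $u\leq K$ gives $C_t\leq (K^{1+\beta}/\beta)\,E_{S_t}[\tilde L_{T-t}]$, and the identity~\eqref{local time identity}, $E_z[L_\epsilon]=\int_0^\epsilon p_s(z,c)\,ds$, together with $\sup_z\int_0^\epsilon p_s(z,c)\,ds=\int_0^\epsilon p_s(c,c)\,ds\to 0$ as $\epsilon\to 0$, implies $C_t\to 0$. For the main term $A_t$, the Markov property of $S$ under $P_z$ gives $A_t = E_z[\varrho(S_T)\mid\cF_t]$, which is a bounded martingale closed by the $\cF_T$-measurable random variable $\varrho(S_T)$; by martingale convergence, $A_t\to\varrho(S_T)$ $P_z$-a.s.~as $t\uparrow T$.

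The delicate point is the control of $C_t$: it hinges on the uniform-in-$z$ vanishing of $E_z[L_\epsilon]$ as $\epsilon\to 0$, i.e., on integrability of $p_s(c,c)$ at $s=0$. This is immediate for reflected arithmetic or geometric Brownian motion via the explicit transition density, and holds more generally under mild regularity of the reflecting diffusion $S$.
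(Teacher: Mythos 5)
Your proposal is correct and follows essentially the same route as the paper: decompose $u(T-t,S_t)$ via the integral equation~\eqref{u int eq} (the paper writes this as conditional expectations given $\cF_t$ using the Markov property), bound $u$ uniformly by $\|\varrho\|_\infty+T\|\phi\|_\infty$, handle the $\varrho$-term by martingale convergence, the $\phi$-term trivially, and kill the nonlinear term via $E_z[L_\epsilon]=\int_0^\epsilon p_s(z,c)\,ds\to0$. The only cosmetic differences are that the paper gets the uniform bound on $u$ by simply dropping the nonnegative subtracted term in~\eqref{u int eq} rather than via Jensen and the first-moment formula, and it evaluates the local-time bound at the point $S_t$ rather than taking a supremum over $z$.
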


\begin{proof}By Proposition~\ref{superprocess prop}, $u$ is the unique nonnegative solution of the nonlinear integral equation
 \begin{align}\label{u int eq}
 u(t,z)=E_z\Big[\,\varrho(S_t)+\int_0^t\phi(S_s)\,ds\,\Big]-E_z\Big[\,\frac1\beta\int_0^tu(t-s,S_s)^{1+\beta}\,L(ds)\,\Big].
 \end{align}
 In particular, we get 
 \begin{align}\label{u bound}
 0\le u(t,z)\le C_\varrho+TC_\phi=: K_T  \qquad\text{for $z\in\bR$ and $t\le T$,}
 \end{align}
 if $C_\varrho$ and $C_\Phi$ are upper bounds for $\varrho$ and $\phi$.
  From the Markov property of $S$ we get that
   \begin{align}\label{u cond exp}
 u(T-t,S_t)=E_z\Big[\,\varrho(S_T)+\int_t^T\phi(S_s)\,ds\,\Big|\,\cF_t\Big]-E_z\Big[\,\frac1\beta\int_t^Tu(T-s,S_s)^{1+\beta}\,L(ds)\,\Big|\,\cF_t\Big].
\end{align}
Thus, dominated  and martingale convergence yield that $P_z$-a.s.
$$
\lim_{t \uparrow T} E_z\Big[\,\varrho(S_T)+\int_t^T\phi(S_s)\,ds\,\Big|\,\cF_t\Big]=\varrho(S_T).
$$
Moreover,~\eqref{local time identity} yields that
\begin{align*}
0&\le E_z\Big[\,\int_t^Tu(T-s,S_s)^{1+\beta}\,L(ds)\,\Big|\,\cF_t\Big]\le K_T^{1+\beta}E_{S_t}[\,L_{T-t}\,]=K_T^{1+\beta}\int_0^{T-t}p_s(S_t,c)\,ds\lra0,
\end{align*}
 as $t\uparrow T$, $P_z$-a.s. This proves the result. 
\end{proof}

\begin{proof}[Proof of Theorem~\ref{main thm}] The proof is based on some ideas from  the proof of Theorem 2.8 in~\cite{SchiedFuel}. For $p\geq2 $ we introduce the  function
\begin{align}\label{phi-funct}
\Phi_p(x,y):=x^p-py^{p-1}x+(p-1)y^p, \ \  x,y\geq 0.
\end{align}
From Young's inequality we get
\begin{align*}
xy^{p-1}\leq \frac{1}{p}x^{p}+\frac{1-p}{p}y^{p},
\end{align*}
and hence that
\begin{align} \label{phi0}
\Phi_p(x,y) \ge 0  \textrm{ with equality if and only if } x=y.
\end{align}

Now let $x_0\in\bR$ and $\xi\in\cX$ be given. By Lemma~\ref{lemma-mon-strat} we may assume that $x_0\ge0$, $\xi_t\le0$, and $X^\xi_t\ge0$  for all $t$. We write $X_t:=X^\xi_t=x_0+\int_0^t\xi_s\,ds$ and define
\begin{align}\label{Ct def eq}
C_t:=\int_0^t|\xi_s|^p\,L(ds)+\int_0^t\phi(S_s) X_s^p\,ds+X_t^pu(T-t,S_t).
\end{align}
The first two terms on the right-hand side represent the cost resulting from using the strategy $\xi$ throughout the time interval $[0,t]$. The rightmost term represents  the asserted minimal cost incurred over the time interval $[t,T]$ when starting at time $t$ with the remainder $X_t$. By Lemma~\ref{lemma-q} we have $P_z$ as $t\uparrow T$
$$C_t\lra C_T:=\int_0^T|\xi_s|^p\,L(ds)+\int_0^T\phi(S_s)X_s^p\,ds+X_T^p\varrho(S_T).
$$
Note that the expectation of the right-hand side is equal to the cost functional~\eqref{cost functional} of $\xi$. 

Let 
$$M_t:=u(T-t,S_t)+\int_0^t\phi(S_s)\,ds-\frac1\beta\int_0^tu(T-s,S_s)^{1+\beta}\,L(ds).
$$
It follows from~\eqref{u cond exp} and  Proposition~\ref{continuous strategies lemma} (whose proof does not depend on Theorem~\ref{main thm}) that $M$ is a continuous martingale. In particular, $u(T-t,S_t)$ is a continuous semimartingale. Applying It\=o's formula to~\eqref{Ct def eq} therefore yields that
\begin{align*} 
dC_t &=|\xi_t|^p\,L(dt)+X_t^p\phi(S_t)\,dt+pX_t^{p-1}u(T-t,S_t)\,dX_t+X_t^{p}\,du(T-t,S_t)  \\
&= \Big(|\xi_t|^p+pX_t^{p-1}\xi_tu(T-t,S_t)+\frac1\beta X_t^{p}u(T-t,S_t)^{1+\beta}\Big)\,L (dt)+X_t^{p}\,dM_t   \\
&=\Phi_p(|\xi_t|,X_tu(T-t,S_t)^{1/(1-p)})\,L  (dt)+X_t^{p}\,dM_t,
\end{align*} 
where we have used~\eqref{phi-funct} and the fact that $\beta+1=p/(p-1)$.  Thus, using that $X_t$ is nonnegative and nondecreasing, hence bounded,
\begin{align*} 
E_{z}[\,C_T-C_0\,]&=E_{z}\bigg[\,\int_{0}^{T}|\xi_s |^{p}\,L(ds)+\int_{0}^{T}X_s^p\phi(S_s)\,ds +\varrho(S_T)X_T^p\,\bigg] -C_0 \\
&=E_{z}\bigg[\,\int_{0}^{T}\Phi_p(|\xi_t|,X_tu(T-t,S_t)^{1/(1-p)})\,L  (dt)\,\bigg].
\end{align*}
It follows from~\eqref{phi0} that the latter expectation is nonnegative and that it vanishes if and only if 
$|\xi_t|=X_tu(T-t,S_t)^{1/(1-p)}$ holds $dL_t\otimes P_z$-almost everywhere, which in our present setting is equivalent to the equation
\begin{align}\label{optimal strategy eqn}
dX_t=- X_tu(T-t,S_t)^\beta\,dL_t. 
\end{align}
Clearly, $X^{\xi^*}$ as defined in the assertion does solve~\eqref{optimal strategy eqn}, and it follows from   Groh's generalized Gronwall inequality~\cite{Groh} that solutions to~\eqref{optimal strategy eqn} with given initial value are $P_z$-a.s.~unique. Therefore, $\xi^*$ is the  $dL_t\otimes P_z$-a.e.~unique optimal strategy in $\cX$. Finally, note that we also have $
C_0=x_0^pu(0,S_0)$, and so the theorem is proved.\end{proof}

\begin{lemma} \label{lemma-RGBM} Let $S$ be a geometric Brownian motion reflecting at $c>0$, and denote by $L$ its local time in $c$. Then there exists an arithmetic Brownian motion $\wt S$ reflecting at $\wt c:=\log c$ such that 
\begin{align} \label{s-exp} 
S_t=\exp(\wt S_t),  \qquad\text{for all $ t\geq 0$, $P_z$-a.s.} 
\end{align} 
If moreover $\wt L$ denotes the local time of $\wt S$ in $\wt c$, then 
\begin{align} \label{local-time-con} 
L_t=c\wt L_t,  \qquad\text{for all $ t\geq 0$, $P_z$-a.s.}
\end{align} 
\end{lemma}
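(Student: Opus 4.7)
The plan is to set $\wt S_t := \log S_t$ and verify both claims directly via It\^o's and Tanaka's formulas, using the uniqueness of the Skorokhod reflection problem.

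For \eqref{s-exp}, I would apply It\^o's formula to $\log$ together with \eqref{reflecting GBM}. The quadratic-variation correction produces the It\^o drift $b-\sigma^2/2$, while the contribution $S_t^{-1}dU_t$ from the reflecting term may be rewritten as $c^{-1}dU_t$ since $dU_t$ is supported on $\{S_t=c\}$. Setting $\wt U_t := c^{-1}U_t$ therefore yields
\begin{align*}
d\wt S_t = \sigma\,dB_t + \bigl(b-\tfrac{\sigma^2}{2}\bigr)dt + d\wt U_t,
\end{align*}
where $\wt U$ is continuous, adapted, nondecreasing and supported on $\{\wt S_t=\wt c\}$, while $\wt S_t\ge \wt c$. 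By uniqueness of the Skorokhod reflection problem this identifies $\wt S$ as an arithmetic Brownian motion with drift $b-\sigma^2/2$ and volatility $\sigma$ reflected at $\wt c$, proving \eqref{s-exp}.

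For \eqref{local-time-con}, I would identify $L$ with $2U$ and $\wt L$ with $2\wt U$ via Tanaka's formula. Applying Tanaka for $(S_t-c)^+$ and using $S_t\ge c$ to rewrite $(S_t-c)^+=S_t-c$, then subtracting the trivial identity $S_t-c=(S_0-c)+\int_0^t dS_s$, gives
\begin{align*}
\tfrac{1}{2}L_t = \int_0^t \Ind{\{S_s=c\}}\,dS_s.
\end{align*}
The continuous-martingale and drift parts of $dS$ vanish under this indicator because the Lebesgue time spent by $S$ at $c$ is almost surely zero, leaving $\tfrac12 L_t = \int_0^t\Ind{\{S_s=c\}}\,dU_s = U_t$, i.e.\ $L_t=2U_t$. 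The same argument applied to $\wt S$ at $\wt c$ yields $\wt L_t = 2\wt U_t$, so combining with $\wt U_t = U_t/c$ gives $L_t = 2U_t = 2c\wt U_t = c\wt L_t$.

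The only point to be careful about is the convention for the local time: the argument interprets $L$ as the symmetric semimartingale local time at $c$, which is the natural object satisfying $L=2U$ under one-sided reflection. Once this convention is fixed, each step is a short application of It\^o's or Tanaka's formula, and I do not anticipate any substantive obstacle.
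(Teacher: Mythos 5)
Your proof is correct, and for the identity \eqref{s-exp} it is essentially the paper's argument: both define $\wt U_t:=U_t/c$, use that $dU_t$ is carried by $\{S_t=c\}$ to replace $S_t^{-1}dU_t$ by $c^{-1}dU_t$, and identify $\log S$ as an arithmetic Brownian motion with drift $b-\sigma^2/2$ reflected at $\wt c$. Where you diverge is the local-time relation \eqref{local-time-con}: the paper simply invokes the change-of-variables formula for semimartingale local times (Exercise 1.24 in Chapter VI of Revuz--Yor), which gives $L^{c}(e^{\wt S})=e^{\wt c}\,L^{\wt c}(\wt S)=c\,\wt L$ directly, whereas you prove it by identifying the local time with the Skorokhod pushing process via Tanaka's formula ($\tfrac12 L_t=\int_0^t\Ind{\{S_s=c\}}dS_s=U_t$, the martingale and drift contributions vanishing because the occupation time of $\{c\}$ is a.s.\ null), and likewise $\wt L=2\wt U$, so $L=2U=2c\wt U=c\wt L$. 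Your route is more self-contained and makes the role of $U$ explicit; the paper's citation is shorter and avoids discussing conventions. One small slip in your closing remark: the Tanaka formula for $(S_t-c)^+$ produces the standard (right) semimartingale local time, and it is \emph{this} object that satisfies $L=2U$ for one-sided reflection from above; the symmetric local time is the average of the right and left versions (the latter being zero here) and equals $U$, not $2U$. This does not affect your conclusion, since whichever convention is fixed and applied consistently to both $S$ and $\wt S$, the normalization constant cancels and $L_t=c\,\wt L_t$ follows.
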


\begin{proof}Let $S$ be as in~\eqref{reflecting GBM} and define $\wt U_t:=\frac1c U_t$. Since $dU_t$ is concentrated on $\{t\,|\,S_t=c\}$ we have $dU_t=S_t\,d\wt U_t$, and so~\eqref{reflecting GBM} yields that 
$$S_t=\exp\bigg(\log S_0+\sigma B_t+\Big(b-\frac{\sigma^2}2\Big)t+\wt U_t\bigg). 
$$
When letting 
$$\wt S_t:= \log S_0+\sigma B_t+\Big(b-\frac{\sigma^2}2\Big)t+\wt U_t
$$
we clearly have that $\wt S_t\ge\wt c$. Moreover, by construction,  $d\wt U_t$ is concentrated on $\{t\,|\,S_t=c\}=\{t\,|\,\wt S_t=\wt c\}$. Therefore, $\wt S$ is an arithmetic  Brownian motion reflecting at $\wt c$.  Finally, the formula $L_t=c\wt L_t$ follows from Exercise 1.24 in Chapter 6.1 of~\cite{RevuzYor}.
\end{proof}

\begin{proof}[Proof of Lemma~\ref{Skorohod lemma}]  We prove Lemma~\ref{Skorohod lemma} for each of the two cases separately. \medskip \\
{\bf Case 1:} $ S$ is a reflecting arithmetic Brownian motion with drift.
First, the translation invariance of Brownian motion implies that we may assume without loss of generality that $c=0$.  Moreover, we can take $\sigma=1$. In a second step,  we  use~\cite[Theorem 3.1]{Peskir} to obtain the following explicit representation of the joint law of the two processes $S$ and $L$. Let $B$ be a standard Brownian motion, and define 
$$B^{-b}_t:=B_t-bt\qquad\text{and}\qquad M^{-b}_t:=\max_{s\le t} B^{-b}_s.
$$
Then, recalling that we assume $c=0$,~\cite[Theorem 3.1]{Peskir} states that
$$
\big(S_0\vee M^{-b}-B^{-b},S_0\vee M^{-b}-S_0\big)\stackrel{\text{law}}{=}(S,L).
$$
Since the law of $(B^{-b}_t)_{0\le t\le T}$ is equivalent to the law of $(B_t)_{0\le t\le T}$, it follows that we can assume without loss of generality that $b=0$ as long as we are interested in obtaining almost-sure statements over a finite time horizon. 
Under  the assumption $b=0$, the joint law of $S$ and $L$ can also be represented as the law of $(|S_0+B|,L^{-S_0})$, where $L^{-S_0}$ is the local time of $B$ in $-S_0$.  From here, the result now follows from the corresponding and well-known result for the joint law of $B$ and $L^{-S_0}$; see, e.g., Proposition~3.1 and Th\'eor\`eme~3.2 in~\cite{LeGallDEA}. \medskip 

\noindent{\bf Case 2:} $S$ is a reflecting geometric Brownian motion. Let $\wt S$ be as in Lemma~\ref{lemma-RGBM}. Since the exponent function is locally Lipschitz continuous and the continuous trajectory $t\mapsto \wt S_t(\om)$ is bounded on $[0,T]$ for each $\om$, it follows from (\ref{s-exp}) and the result for Case 1 that 
\begin{align*} 
\sup_{t\le T}\big|S^{(n)}_{\lfloor 2^{2n}t\rfloor}-S_{t}\big|\lra0. 
\end{align*}
Note that 
\begin{align}  \label{disc-l-t-conv} 
\ell^{(n)}_k:=  c \tilde \ell^{(n)}_k, 
\end{align}
where $\tilde \ell^{(n)}_k$ is the discretized local time at $\tilde c=\log c$ of the arithmetic Brownian motion $\wt S$, which was introduced in (\ref{bm-dics-lt}).
From (\ref{local-time-con}), (\ref{disc-l-t-conv}) and the result for Case 1 we get that   
\begin{align*} 
 \sup_{t\le T}\big|\ell^{(n)}_{\lfloor 2^{2n}t\rfloor}-L_{t}\big|\lra0
 \end{align*} 
also holds in Case 2.\end{proof}

\begin{remark}It was shown in the proof of in~\cite[Proposition~3.1]{LeGallDEA} that, in the case in which $S$ is a standard Brownian motion, the stopping times $\tau^{(n)}_k$ have the property that, $P_z$-a.s., $\tau^{(n)}_{\lfloor 2^{2n}t\rfloor}\lra t$ locally uniformly in $t$. By applying the arguments from the proof of  Lemma~\ref{Skorohod lemma}, we thus obtain that for all $T>0$
\begin{align}\label{tau convergence}
\sup_{t\le T}\big|\tau^{(n)}_{\lfloor 2^{2n}t\rfloor}-t\big|\lra0\qquad\text{$P_z$-a.s.}, 
\end{align}
 for the case where $S$ is a reflecting arithmetic Brownian motion with drift. In Case 2, the stopping times $\tau_k^{(n)}$ have precisely the form~\eqref{taunkfor ABM} when $S$ is replaced by $\wt S$, where $\wt S$ is as in  Lemma~\ref{lemma-RGBM}. Therefore~\eqref{tau convergence} also holds in Case 2.\end{remark}

\begin{proof}[Proof of Proposition~\ref{continuous strategies lemma}]   
We prove Proposition~\ref{continuous strategies lemma} for each of the two cases separately. \medskip \\
{\bf Case 1:} $ S$ is a reflecting arithmetic Brownian motion with drift.
 By our formula~\eqref{optimal strategy formula}
 it is sufficient to show that the function $u(t,z)$ defined in~\eqref{u def eq} is jointly continuous in $t$ and $z$. Recall the integral equation~\eqref{u int eq} satisfied by $u$.  Clearly, the expression $h(t,z):= E_z[\,\varrho(S_t)+\int_0^t\phi(S_s)\,ds\,]$ is jointly continuous in $t$ and $z$ due to the strong Feller property of $S$ and standard arguments. 
 To show the continuity of the second expectation on the right-hand side  of~\eqref{u int eq}, we  assume $c=0$  and $\sigma=1$ as in the proof of Lemma~\ref{Skorohod lemma}. By the arguments used in the proof of that lemma, 
  \begin{align}\label{u loc tim term eq}
E_z\Big[\,\int_0^tu(t-s,S_s)^{1+\beta}\,L(ds)\,\Big]=E\bigg[\,e^{bB_T-\frac12b^2T}\int_0^tu(t-s,0)^{1+\beta}\,L^{-z}(ds)\,\bigg],
 \end{align}
 where $t\le T$, $B$ is a standard Brownian motion  under $P$, and $L^{-z}$  is the local time of $B$ in $-z$. 
 We now investigate the continuity of $u(t,z)$ as a function of $t$ when $z$ is fixed.   For $\delta>0$ and $T\ge t+\delta$ we get from~\eqref{u int eq},~\eqref{u loc tim term eq}, and our uniform bound~\eqref{u bound}
 on $u$ that
 \begin{align*}
 \big|u(t+\delta,z)-u(t,z)\big|&\le  \big|h(t+\delta,z)-h(t,z)\big|+\frac{K_T^{1+\beta}}\beta E\Big[\,e^{bB_T-\frac12b^2T}(L^{-z}_{t+\delta}-L_t^{-z})\,\Big]\\&\qquad+\frac{K_T^\beta}\beta E\bigg[\,e^{bB_T-\frac12b^2T}\int_0^t\big|u(t+\delta-s,0)-u(t-s,0)\big|\,L^{-z}(ds)\,\bigg]
 \end{align*}
When letting
 $$g_\delta(t,z):=\big|h(t+\delta,z)-h(t,z)\big|+\frac{K_T^{1+\beta}}\beta E\Big[\,e^{bB_T-\frac12b^2T}(L^{-z}_{t+\delta}-L_t^{-z})\,\Big],
 $$
 we thus get that $w_\delta(t,z):= \big|u(t+\delta,z)-u(t,z)\big|$ satisfies
 $$w_\delta(t,z)\le g_\delta(t,z)+\frac{K_T^\beta}\beta E\bigg[\,e^{bB_T-\frac12b^2T}\int_0^tw_\delta(t-s,0)\,L^{-z}(ds)\,\bigg].
 $$
 When defining the continuous measure $\mu_z$ through
 $$\mu_z([0,t])=\frac{K_T^\beta}\beta E\bigg[\,e^{bB_T-\frac12b^2T}L_t^{-z}\,\bigg],
 $$
  Groh's generalized Gronwall inequality~\cite{Groh} (see also Theorem 5.1 in Appendix~5 of~\cite{EthierKurtz}) now yields that  
 $$w_\delta(t,0)\le \sup_{s\le T-\delta}g_\delta(s,0)e^{\mu_0([0,t])}.
  $$
  Therefore,
  \begin{align}\label{wdelta estimate}
  \sup_{t\le T-\delta,\,|z|\le r}w_\delta(t,z)\le \sup_{s\le T-\delta,\,|z|\le r}g_\delta(s,z)\Big(1+e^{\mu_0([0,T])}\sup_{z\in\bR} \mu_z([0,T])\Big).
  \end{align}
   The expression $\sup_{z} \mu_z([0,T])$ is finite, because it follows from~\cite[Lemma 1]{Csaki} that
 \begin{align}\label{local time exp moments}
 E\Big[\,\exp\Big(\lambda\sup_{z\in\bR}L^z_T\Big)\,\Big]<\infty,\qquad\text{for all $\lambda\ge0$.}
 \end{align}
 Therefore the right-hand side of~\eqref{wdelta estimate} tends to zero for $\delta\da0$, locally uniformly in $z$, and we conclude that $ u(t,z)$ is locally uniformly continuous in $t$, locally uniformly in  $z$.  
 
 The joint continuity of $(t,z)\mapsto u(t,z)$ will now follow if we show that $z\mapsto u(t,z)$ is continuous for each fixed $t$. 
But this continuity   follows via dominated convergence from~\eqref{local time exp moments},~\eqref{u int eq},~\eqref{u loc tim term eq},  our uniform bound on $u$, and  the joint continuity of $L^{-z}_t$ in $t$ and $z$.\medskip \\
{\bf Case 2:} $S$ is a reflecting geometric Brownian motion. Use the notation of Lemma~\ref{lemma-RGBM} together with (\ref{s-exp}) to get 
\begin{align*}
u(t,z)&=E_z\Big[\,\varrho(e^{\widetilde S_t})+\int_0^t\phi(e^{\widetilde S_s})\,ds -\frac{1}{\beta}\int_0^tu(t-s,e^{\widetilde S_s})^{1+\beta}\,L(ds)\,\Big].  
\end{align*}
Hence by setting $\tilde z =\log z$, $\tilde u(t,x) = u(t,e^{x})$, $\tilde \phi(x) =   \phi(e^x)$ and $\tilde \varrho(x) =   \varrho(e^x)$ and using (\ref{local-time-con}) we get 
\begin{align*}
\tilde u(t, \tilde z)&=E_{\tilde z}\Big[\, \tilde \varrho(\widetilde S_t)+\int_0^t \tilde \phi(\widetilde S_s)\,ds -\frac{c}{\beta}\int_0^t\tilde u(t-s,\widetilde S_s)^{1+\beta}\,\widetilde L(ds)\,\Big],  
\end{align*}
and the joint continuity of $u$ follows immediately from Case 1.  
\end{proof}

\begin{proof}[Proof of Proposition~\ref{convergence prop}] It follows from Lemma~\ref{Skorohod lemma} that the finite and nonnegative Radon measures associated with the right-continuous nondecreasing functions $t\mapsto \ell^{(n)}_{\lfloor 2^{2n}t\rfloor}$ converge vaguely and $P_z$-a.s.~to the continuous measure $dL_t$. Therefore, and by the continuity of $t\mapsto\xi_t$,
$$\int_0^{t}\xi_s\,d \ell^{(n)}_{\lfloor 2^{2n}s\rfloor}\lra \int_0^t\xi_s\,dL_s\qquad \text{$P_z$-a.s.}
$$
Moreover, it follows from~\eqref{tau convergence} that 
$$\sup_{s\le t}\big|\xi_s-\xi_{\tau^{(n)}_{\lfloor 2^{2n}s\rfloor}}\big|\lra0\qquad\text{$P_z$-a.s.}
$$
Therefore, for each fixed $t$, 
$$\sum_{k=0}^{\lfloor 2^{2n}t\rfloor}\xi^{(n)}_k(\ell^{(n)}_k-\ell^{(n)}_{k-1})=\int_0^{t}\xi_{\tau^{(n)}_{\lfloor 2^{2n}s\rfloor}}\,d \ell^{(n)}_{\lfloor 2^{2n}s\rfloor}\lra \int_0^t\xi_s\,dL_t\qquad \text{$P_z$-a.s.}
$$
This shows that, $P_z$-a.s., $X^{\xi,(n)}_{\lfloor 2^{2n}t\rfloor}\to X^\xi_t$ for all rational $t$, and thus, by continuity and Lemma~\ref{Skorohod lemma}, for all $t$. The convergence statement for the accumulated transaction costs  follows by applying the same argument to $|\xi_s|^p$ instead of $\xi_s$. 
\end{proof}

\noindent{\bf Acknowledgement.} The authors thank the Institute for Pure and Applied Mathematics at UCLA for the invitations to  the workshop on \emph{The Mathematics of High Frequency Financial Markets}, during which the research on this paper was completed. The authors furthermore thank Leo Speiser for helping with Figure~\ref{EUR_CHF Fig}. A.S.~gratefully acknowledges financial support by Deutsche Forschungsgemeinschaft through  Research Grant SCHI/3-2.

\end{document}